\documentclass[12pt]{article}
\usepackage{tikz}
\usepackage[english]{babel}
\usepackage{graphicx}
\usepackage{amsmath}
\usepackage{amsthm}
\usepackage{amssymb}
\usepackage{lscape}
\usepackage{caption}
\usepackage[hyperindex,plainpages=false]{hyperref}
\usepackage[left=1.1in,right=1.1in,bottom=1in,top=1in,a4paper]{geometry}

\newtheorem{theorem}{Theorem}
\theoremstyle{definition}
\theoremstyle{lemma}
\newtheorem{definition}{Definition}

\newtheorem{lemma}{Lemma}

\title{Infinitely many conservation laws for the discrete KdV equation}
\author{Alexander G. Rasin and Jeremy Schiff 
\\ Department of Mathematics,
\\ Bar-Ilan University,
\\ Ramat Gan, 52900, Israel
\\ {rasin@math.biu.ac.il},~~~{schiff@math.biu.ac.il }}

\begin{document}
\maketitle
\begin{abstract}
In \cite{RH3} Rasin and Hydon suggested a way to construct an infinite number of conservation 
laws for the discrete KdV equation (dKdV), by repeated application of a certain 
symmetry to a known conservation law. It was not decided, however, whether the resulting 
conservation laws were distinct and nontrivial. In this paper we obtain the following 
results: (1) We give an alternative
method to construct an infinite number of conservation laws using a discrete version of 
the Gardner transformation. 
(2) We give a direct proof that
the Rasin-Hydon conservation laws are indeed distinct and nontrivial. (3) We consider
a continuum limit in which the dKdV equation becomes a first-order eikonal equation. In 
this limit the two sets of conservation laws become the same, and are evidently
distinct and nontrivial. This proves the nontriviality of the conservation laws 
constructed by the Gardner method, and gives an alternate proof of 
the nontriviality of the conservation laws 
constructed by the Rasin-Hydon method. 
\end{abstract}

\section{Introduction}

The theory of integrable quad-graph equations starts from the works of Hirota \cite{Hi1,Hi3,HiSa}, where the author presents a nonlinear partial difference equation (P$\Delta$E) which ``reduces to the Korteweg-de Vries equation in the weakly nonlinear and continuum limit'' \cite{Hi1}. More recently many interesting properties of quad-graph equations have been found that can be interpreted as integrability properties \cite{Za0}. Lax pairs for some quad-graph equations were presented in \cite{BS0,NC0}. The ultra-local singularity confinement criterion for quad-graphs equations was developed in \cite{GKT,GRP,RGH}. This criterion can be viewed as a discrete version of the Painlev\'e property. Further research showed that certain quad-graph equations are consistent on the cube and all equations satisfying this condition were classified \cite{ABS,BS1}. From this consistency one can derive the Lax pair and B\"acklund transformations. So consistency on the cube also becomes an integrability criterion, even though there is no analog of this in the theory of partial differential equations (PDE). The derivation of mastersymmetries for certain quad-graph equations can be found in \cite{RH2}. The mastersymmetry can be used to construct infinite hierarchies of symmetries, and this can also can be interpreted as an integrability property.

One more criterion for integrability is the existence of an infinite number of conservation laws (in involution, in the case of a Hamiltonian system). The investigation of conservation laws of quad-graph equations was initiated by Hydon in \cite{Hy0}. There the author presented a method for computation of conservation laws of quad-graph equations and derived conservation laws for the modified discrete Korteweg-de Vries (mdKdV) equation. Hydon's method was improved in \cite{RH0,RH1,RH3}; in these papers the authors also derived three and five point conservation laws for all the equations in the ABS classification. In \cite{RH3} a suggestion for constructing an infinite number of conservation laws was given (for many of the equations in the ABS classification). But it was not shown that all the resulting conservation laws were distinct and nontrivial. 

In this paper we focus on conservation laws for the discrete KdV equation 
(dKdV, $\mathbf{H1}$ in the ABS classification) 
\begin{equation}(u_{0,0}-u_{1,1} ) ( u_{1,0}-u_{0,1} ) +\beta-\alpha=0.\label{eq01}
\end{equation}
Here $k,l\in \mathbb{Z}$ are independent variables
and $u_{0,0}=u(k,l)$ is a dependent variable that is defined on the
domain $\mathbb{Z}^2$. We denote the values of this variable at
other points by $u_{i,j}=u(k+i,l+j)=S_k^iS_l^ju_{0,0}$, 
where $S_k,~S_l$ are the unit forward shift operators in $k$ and $l$ respectively. 
In \cite{RH3} it was shown that by applying the symmetry 
\[X={\frac{k}{u_{1,0}-u_{-1,0}}}\frac{\partial}{\partial{u_{0,0}}}
   -\partial_\alpha\]
to the dKdV conservation law 
$$
F=-\ln\left(u_{0,1}-u_{-1,0}\right),~~~G=\ln\left(u_{1,0}-u_{-1,0}\right), 
\label{basicsym}
$$
and then adding a trivial conservation law, we obtain a new nontrivial dKdV 
conservation law 
\[ F_{new}=\frac{-1}{(u_{0,0}-u_{-2,0})(u_{0,1}-u_{-1,0})},~~~
 G_{new}=\frac1{(u_{0,0}-u_{-2,0})(u_{1,0}-u_{-1,0})}.\]
It was suggested that an infinite number of conservation laws could be generated by
repeating this procedure.

The first result of this paper is an alternative 
method to construct an infinite number of conservation laws using a discrete version of 
the {\it Gardner transformation}. The Gardner transformation is an
elementary method to construct the infinite number of conservation laws of 
the continuum KdV equation \cite{DrJo}. We belive the conservation laws of dKdV 
obtained from this new method are the same as those obtained from the 
Rasin-Hydon symmetry approach, but do not prove it. Our second result 
is a direct proof of the nontriviality of the Rasin-Hydon conservation laws. 
The proof exploits a fundamental lemma about conservation laws of a particular form as
well as properties of the discrete Euler operator. Our third contribution is to consider
a certain continuum limit of dKdV, in which the equation becomes a first-order eikonal 
equation. In this limit the two sets of conservation laws become the same, and are evidently
nontrivial. This proves the nontriviality of the conservation laws constructed by the 
Gardner method, and gives an alternate proof of 
nontriviality of the conservation laws constructed by the Rasin-Hydon method. 
It also provides evidence for our hypothesis that the conservation 
laws constructed by the two methods do indeed coincide.  

The structure of this paper is as follows: Section 2 is a summary of the general 
theory of conservation laws, including a new lemma about a particularly significant 
kind of conservation laws for P$\Delta$Es. Section 3 presents the Gardner method 
for dKdV. Section 4 gives the proof of nontriviality of the Rasin-Hydon conservation 
laws. Section 5 decribes the continuum limit. Finally, section 6 contains some 
concluding comments and questions for further study. 

\section{Conservation laws} 

We find it useful to summarize the standard results on conservation laws for both 
PDEs and P$\Delta$Es, as general background to the paper, and in particular as
background for a crucial lemma we will need for  section 4. 

For a scalar partial differential equation 
with two independent variables $x,t$ and a single
dependent variable $u$, a (local) conservation law is an expression of the 
form 
$$ \partial_t G + \partial_x F  = 0 \label{claw} $$
which holds as a consequence of the equation. 
Here $F,G$,  which are called ``the components of the conservation law'',
are functions of $x,t,u$ and a finite number of  partial derivatives of $u$. 
For example, if $u$ satisfies  the KdV equation 
$$ u_t = \frac14 u_{xxx} + 3 uu_x  $$ 
we then have 
\begin{eqnarray*} 
\partial_t \left( u \right) + 
\partial_x \left( -\frac14u_{xx} - \frac32 u^2 \right)  &=& 0 \ ,\\
\partial_t \left( u^2 \right) + 
\partial_x \left( -\frac12 uu_{xx} + \frac14 u_x^2 - 2u^3 \right)  &=& 0 \ ,\\
\partial_t \left( 4u^3 - u_x^2  \right) + 
\partial_x \left( -9u^4 + \frac12 u_xu_{xxx} - \frac14 u_{xx}^2 
        -3u^2u_{xx} + 6uu_x^2  \right)  &=& 0 \ .
\end{eqnarray*}
We say a conservation law is trivial for an equation if by application of the 
equation to the individual components of the conservation law we can bring them
into a form for which the law holds for all functions $u$, not just on solutions of 
the equation. Equivalently, we say the conservation law with components 
$F,G$ is trivial if we can write 
\begin{eqnarray*} 
F &=& F_0 - \partial_t f  \\
G &=& G_0 + \partial_x f  
\end{eqnarray*}
where $F_0,G_0$ both vanish as a consequence of the equation and $f$ is an 
arbitrary function of 
$x,t,u$ and a finite number of  partial derivatives of $u$ \cite{Ol0}. 
For KdV, and more generally for any equation of the form $u_t=p(x,t,u,u_x,u_{xx},\ldots)$,
there is a simple way to recognize nontrivial conservation laws. We first 
use the equation to eliminate all occurences of $t$ derivatives in $G$. 
The conservation law is trivial if and only if the resulting component $G$ is 
a (total) $x$-derivative (of some function $f$ of 
$x,t,u$ and a finite number of $x$-derivatives of $u$). Below we will 
prove an analog of this result for dKdV. 

Moving now to the discrete case, we consider a general quad-graph equation 
\begin{equation} 
P(k,l,u_{0,0},u_{1,0},u_{0,1},u_{1,1},{\bf a})=0 
\label{eq2o}\end{equation}
where ${\bf a}$ denotes a vector of parameters. 
A conservation law is an expression of the form
\begin{equation}
(S_l-I)G+(S_k-I)F=0
\label{eq5o}
\end{equation}
which holds as a consequence of the equation. 
$F$ and $G$ are called ``the components of the conservation law'',
and are functions of $k,l$, the parameters ${\bf a}$, and the values of
the variable $u$ at a finite number of points. In the above $I$ denotes 
the identity mapping. 

We say a conservation law is trivial for an equation if it takes the form 
\begin{eqnarray} 
F &=& F_0 - (S_l-I) f  \nonumber \\
G &=& G_0 + (S_k-I) f  \label{trivcl1} 
\end{eqnarray}
where $F_0,G_0$ both vanish as a consequence of the equation and $f$ is an 
arbitrary function of $k,l,{\bf a}$ and the values of
the variable $u$ at a finite number of points. 

The conservation laws we will study in this paper for dKdV have a special form: 
\begin{definition}
We say the conservation law (\ref{eq5o}) for the quad-graph equation (\ref{eq2o})
is ``on the horizontal line'' if $G$ depends only on values of $u_{n,m}$ with $m=0$,
and ``on the vertical line''  if $F$ depends only on values of $u_{n,m}$ with $n=0$.
\end{definition}
We then have the following lemma: 
\begin{lemma}\label{lemma1}
For the dKdV equation, a trivial conservation law on the horizontal (vertical) line 
can always be presented in a form with $G=(S_k-I)f$ ($F=(S_l-I)f$), where $f$ 
is a function  on the horizontal (vertical) line.
\end{lemma}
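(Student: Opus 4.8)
The plan is to recast the conclusion as a single vanishing statement for the discrete Euler operator in the $k$-direction, and then to obtain that statement from the triviality hypothesis. Let $E_k=\sum_{n}S_k^{-n}\,\partial/\partial u_{n,0}$ be the Euler operator of the one-dimensional difference calculus along the horizontal line. I will use two standard facts. First, $E_k\big((S_k-I)g\big)=0$ for \emph{every} function $g$, not only for $g$ on the horizontal line; this follows by re-indexing the summation, since $E_k$ only sees the dependence on the $u_{n,0}$. Second (exactness of the one-dimensional discrete variational complex), a function $H$ on the horizontal line can be written as $H=(S_k-I)f$ with $f$ on the horizontal line if and only if $E_kH=0$. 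Granting these, the lemma in the horizontal case is \emph{equivalent} to the single claim $E_kG=0$, and the vertical case follows by interchanging $k$ and $l$.

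To reach $E_kG=0$, write $G=G_0+(S_k-I)f$ as in the definition of triviality, with $G_0$ vanishing on solutions. By the first fact, $E_kG=E_kG_0$. Since $G$ depends only on the $u_{n,0}$, so does $E_kG$, hence $E_kG_0$ is a function on the horizontal line. Now a function on the horizontal line that vanishes on every solution of \eqref{eq01} must be identically zero: the values $\{u_{n,0}\}$ may be prescribed arbitrarily, and prescribing in addition one further value of $u$ on each row $l\neq 0$ determines the whole solution by repeated use of \eqref{eq01}, so the horizontal-line values are unconstrained. Therefore it suffices to prove that $E_kG_0$ vanishes on solutions.

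This is the crux, and the point at which triviality must really be used: it is \emph{not} true that $E_k$ sends an arbitrary quantity that vanishes on solutions to another such quantity (for instance $E_kP$ does not vanish on solutions), so one must exploit the interplay of triviality with the horizontal-line form of $G$. I would proceed by writing $G_0=\sum_\gamma c_\gamma\,S^\gamma P$ (a finite sum over $\gamma=(\gamma_1,\gamma_2)\in\mathbb Z^2$, $S^\gamma=S_k^{\gamma_1}S_l^{\gamma_2}$, $P$ the polynomial in \eqref{eq01}), which is possible since anything vanishing on solutions lies in the difference ideal generated by the shifts of $P$. Expanding $E_kG_0$ by the Leibniz rule, every term in which $\partial/\partial u_{n,0}$ hits a coefficient $c_\gamma$ keeps a shifted factor $S^{\gamma'}P$ and so vanishes on solutions; the remaining terms, in which $\partial/\partial u_{n,0}$ hits $S^\gamma P$, should reorganise---after collecting the outer shifts and using $\partial_{u_{n,0}}(S^\gamma P)=S^\gamma(\partial_{u_{(n,0)-\gamma}}P)$---into the adjoint linearisation $\mathcal D_P^{\,*}$ applied to a combination of the $c_\gamma$ that is the $k$-part of the characteristic of the conservation law; the triviality hypothesis forces this to vanish on solutions, and $\mathcal D_P^{\,*}$, being a finite-order difference operator with function coefficients, preserves vanishing on solutions. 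I expect the genuine difficulty here to be the bookkeeping: one must choose the representation $G_0=\sum c_\gamma S^\gamma P$ carefully (bounding the $l$-levels of the $\gamma$'s and controlling how the $c_\gamma$ depend on the top-level variables) and verify that ``vanishing on solutions'' survives each manipulation. A more hands-on alternative would be an induction on the vertical support of $f$: pass to a trivial representation in which that support is minimal, use \eqref{eq01} to collapse the top row $l=d>0$ of $f$ to a single point, and extract a contradiction from the rigid form this imposes on $G_0=G-S_kf+f$ (which then meets row $d$ in just two points, with no product term between the two corresponding variables); but arranging the collapse so that the trivial form is preserved at each step, and disposing of the single-point case, is itself delicate.
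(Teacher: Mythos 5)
Your reduction of the lemma to the single claim $E_kG=0$ is a clean and legitimate reframing, genuinely different from the paper's argument: the identity $E_k\big((S_k-I)g\big)=0$ for arbitrary $g$, the exactness of the one-dimensional discrete variational complex (a function on the line is killed by $E_k$ iff it is $(S_k-I)f$ with $f$ on the line), and the observation that a function on the horizontal line vanishing on all solutions of dKdV must vanish identically are all correct. So the lemma would follow if you could show that $E_kG_0$ vanishes on solutions.

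That, however, is exactly the step you do not prove, and it carries the entire content of the lemma. Your own remark that $E_kP$ fails to vanish on solutions shows that no soft argument suffices, and the proposed route via $G_0=\sum_\gamma c_\gamma S^\gamma P$ stalls precisely where it matters: after the Leibniz expansion the surviving terms are
$\sum_{\gamma_2\in\{0,-1\}}\sum_{m\in\{0,1\}}\bigl(S_k^{-m-\gamma_1}c_\gamma\bigr)\,S_k^{-m}S_l^{\gamma_2}\bigl(\partial_{u_{m,-\gamma_2}}P\bigr)$,
a combination of four independent shifted derivatives of $P$, and the assertion that ``the triviality hypothesis forces this to vanish on solutions'' is not justified --- triviality has already been spent in producing $G_0$, and the quantity here is built from the one-dimensional operator $E_k$ rather than the full Euler operator, so the standard characteristic-form machinery does not apply to it. The ``hands-on alternative'' you sketch at the end (collapsing the vertical support of $f$ using the equation) is in fact the paper's actual proof, but you explicitly defer the two points that do the work there: normalising $f$ to a tetris-shaped support and then showing, by an explicit substitution of $\omega_2$ into the top slot of $f$, that the requirement that $G-(S_k-I)f$ vanish on solutions while $G$ stays on the line forces $\partial f/\partial u_{n_l,m_t}=0$. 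As it stands, both of your routes terminate at an acknowledged gap, so the lemma is not established.
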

This is a discrete analog of the usual way to recognize trivial conservation laws 
for the continuum KdV equation, as described above. Even though we state the lemma
here just for the dKdV equation, it in fact holds for other quad-graph equations; 
the equation is only used in the proof in a weak way.
Note that if the $G$ of the trivial conservation law on the 
horizontal line in the lemma depends on $u_{n,0}$ with $n_l\le n\le n_r$, then $f$ 
depends on $u_{n,0}$ with $n_l\le n\le n_r-1$ (and similarly in the vertical case). 
We present the proof just for the horizontal case, the vertical case is similar. 
\begin{proof}
Since the conservation law is trivial, 
we can write $G=G_0+(S_k-I)f$, where $G_0$ vanishes on solutions of the equation, 
but it is possible that $G_0$ and $f$ may not depend only on values 
on the line, see figure 1. 
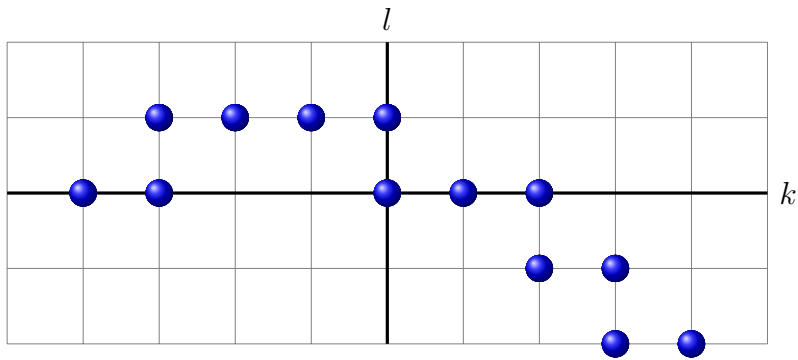
\begin{figure}[ht]
\begin{center}
\begin{tikzpicture}
\draw[step=1cm,gray,very thin] (0,0) grid (10,4);
\draw[very thick] (0,2)--(10,2)node[right]{$k$} (5,0)--(5,4)node[above]{$l$};
\foreach \x/\y in {1/2,2/2,2/3,3/3,4/3,5/3,5/2,6/2,7/2,7/1,8/1,8/0,9/0}
{\shade[ball color=blue] (\x,\y)circle(0.18cm);} 
\end{tikzpicture}
\end{center}
\caption{Possible points of $G_0$}
\end{figure}
The forms of $G_0$ and $f$ are not uniquely determined; we have the freedom 
to add to $f$ any term that vanishes on solutions of the equation and subtract 
a corresponding term from $G_0$. Using the equation in the form 
\begin{equation}
u_{1,1}=\omega_2(u_{0,0},u_{0,1},u_{1,0})\ ,\label{d161}
\end{equation} 
we can eliminate the occurence in $f$ of all values of $u$ above the horizontal line
(where $G$ is valued), with the possible exception of values on a vertical line going 
through the leftmost point on the horizontal line. Similarly, using the 
equation in the form 
\begin{equation}
u_{1,0}=\omega_1(u_{0,0},u_{0,1},u_{1,1})\label{d162}
\end{equation} 
we can eliminate in $f$ values of $u$ below the horizontal line and to the right 
of the vertical line. Thus without loss of generality 
we can take $f$ to be valued on a ``tetris'', see figure 2.
\begin{figure}[ht]
\begin{center}
\begin{tikzpicture}
\draw[step=1cm,gray,very thin] (0,0) grid (10,4);
\draw[very thick] (0,2)--(10,2)node[right]{$k$} (5,0)--(5,4)node[above]{$l$};
\foreach \x/\y in {3/3,3/4,3/1,3/0,3/2,4/2,5/2,6/2,7/2,8/2}
{\shade[ball color=blue] (\x,\y)circle(0.18cm);} 
\end{tikzpicture}
\end{center}
\caption{Points of $f$ (after moving to a tetris)} 
\end{figure}
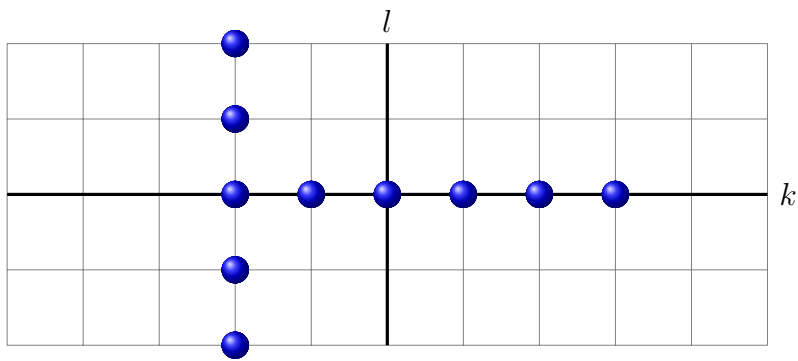
More formally, we have 
$$
f = f\left(u_{n_l,m_b},u_{n_l,m_b+1},\ldots,u_{n_l,m_t}, u_{n_1+1,0},u_{n_l+2,0},\ldots,u_{n_r,0}
    \right)\ . 
$$
Here $m_b\le 0$ and $m_t\ge 0$ denote the lowest and highest values on the vertical axis,
and $n_l$, $n_r$ the lowest and highest on the horizontal axis. The corresponding 
points of $(S_k-I)f$ are indicated in figure 3. 
\begin{figure}[ht]
\begin{center}
\begin{tikzpicture}
\draw[step=1cm,gray,very thin] (0,0) grid (10,4);
\draw[very thick] (0,2)--(10,2)node[right]{$k$} (5,0)--(5,4)node[above]{$l$};
\foreach \x/\y in {4/3,4/4,4/1,4/0,3/3,3/4,3/1,3/0,3/2,4/2,5/2,6/2,7/2,8/2,9/2}
{\shade[ball color=blue] (\x,\y)circle(0.18cm);} 
\end{tikzpicture}
\end{center}
\caption{Points of $(S_k-I)f$ (after moving to a tetris)}  
\end{figure}
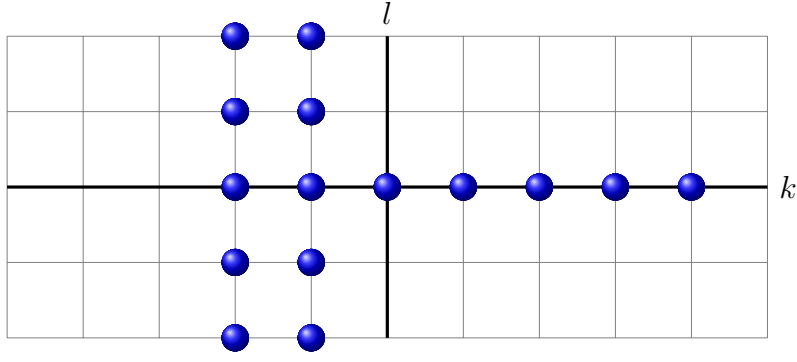
We know that $G=G_0+(S_k-I)f$ where $G$ is valued on the horizontal line 
and $(S_k-I)f$ is valued on the points shown in figure 3. Assume now that $m_t>0$,
i.e. that $f$ depends nontrivially on a point above the line. We write 
\begin{eqnarray*}
&& (S_k-I)f \\
&=& 
f\left(u_{n_l+1,m_b},u_{n_l+1,m_b+1},\ldots,u_{n_l+1,m_t}, u_{n_1+2,0},u_{n_l+3,0},\ldots,u_{n_r+1,0}
 \right)  \\
&-&  
f\left(u_{n_l,m_b},u_{n_l,m_b+1},\ldots,u_{n_l,m_t}, u_{n_1+1,0},u_{n_l+2,0},\ldots,u_{n_r,0}  
 \right) \\
&=& \left[
f\left(u_{n_l+1,m_b},u_{n_l+1,m_b+1},\ldots,u_{n_l+1,m_t}, u_{n_1+2,0},u_{n_l+3,0},\ldots,u_{n_r+1,0}
 \right)  \right. \\
&-&\left.  
f\left(u_{n_l+1,m_b},u_{n_l+1,m_b+1},\ldots,\omega_2(u_{n_l,m_t-1},u_{n_l,m_t},u_{n_1+1,m_t-1}), 
      u_{n_1+2,0},u_{n_l+3,0},\ldots,u_{n_r+1,0}
 \right)  \right]\\
&+&  \left[
f\left(u_{n_l+1,m_b},u_{n_l+1,m_b+1},\ldots,\omega_2(u_{n_l,m_t-1},u_{n_l,m_t},u_{n_1+1,m_t-1}), 
      u_{n_1+2,0},u_{n_l+3,0},\ldots,u_{n_r+1,0}
 \right)  \right. \\
&-& \left. 
f\left(u_{n_l,m_b},u_{n_l,m_b+1},\ldots,u_{n_l,m_t}, u_{n_1+1,0},u_{n_l+2,0},\ldots,u_{n_r,0} 
  \right) \right]
\ .
\end{eqnarray*}
The term in the first square brackets here 
vanishes on solutions of the equation. 
The term in the second square brackets only depends on a single value at the vertical 
level $m_t$, $u_{n_l,m_t}$. Evidently, if we want $G=G_0+(S_k-I)f$ to hold, where $G$ is
valued on the line and $G_0$ vanishes on solutions of the equation, we must demand
that the term in the second square brackets in fact be independent of $u_{n_l,m_t}$
on solutions of the equation,
i.e. that 
\begin{eqnarray*}
&& \left.\frac{\partial f}{\partial u_{n_l,m_t}}
\right|_{\left(u_{n_l+1,m_b},u_{n_l+1,m_b+1},\ldots,\omega_2,u_{n_1+2,0},u_{n_l+3,0},\ldots,u_{n_r+1,0}\right)} 
\left.\frac{\partial \omega_2}{\partial u_{0,1}}\right|_{(u_{n_l,m_t-1},u_{n_l,m_t},u_{n_1+1,m_t-1})}  \\
&=& 
\left.\frac{\partial f}{\partial u_{n_l,m_t}}
\right|_{\left(u_{n_l,m_b},u_{n_l,m_b+1},\ldots,u_{n_l,m_t}, u_{n_1+1,0},u_{n_l+2,0},\ldots,u_{n_r,0}\right)} 
\end{eqnarray*}
on solutions of the equation. (On the LHS of the above formula $\omega_2$ is written 
as short for  $\omega_2(u_{n_l,m_t-1},u_{n_l,m_t},u_{n_1+1,m_t-1})$.) 
Clearly the necessary identity  will hold if $\frac{\partial f}{\partial u_{n_l,m_t}}=0$ i.e. 
if $f$ is independent of $u_{n_l,m_t}$. Further manipulation (using the equation 
to make the two sides of the last equation depend on the same 
values of $u$) shows that indeed this is the only case. 
Thus we have a contradiction, our assumption that $f$ depends nontrivially on 
a point above the line has been proved wrong. 

By a similar argument  we show that 
$f$ does not depend on points below the line. Finally, if $G=G_0+(S_k-I)f$ and 
both $G$ and $f$ are on the line, and $G_0$ vanishes on solutions of the 
equation, clearly we must have $G_0=0$ and  the lemma is established.  
\end{proof}

Having established this lemma (at least for the case of the dKdV equation, and it is 
also true in some generality), it just remains to indicate how this allows us to 
easily identify nontrivial conservations laws on the line. For this
we use the discrete Euler operator. The discrete Euler operator is defined 
\cite{HM0} by 
\begin{equation}
E(A)=\sum_{n,m}S_k^{-n}S_l^{-m}\left(\frac{\partial A}{\partial u_{n,m}}\right).
\end{equation}
Here $A$ is a function of finitely many values $u_{n,m}$ of the variable $u$. Clearly
$$ E(S_k A) = E(S_l A) = E(A)\ , $$
and thus 
$$ E((S_k-I)A) = E((S_l-I)A) = 0\ . $$ 
So given the $G$ ($F$) component of a conservation law on the horizontal (vertical) line,
if application of the Euler operator does not give zero, it is nontrivial.  

\section{The Gardner method for dKdV Conservation laws} 

Before presenting the Gardner method for generating conservation laws of dKdV,
we review the method for continuum KdV. 
The Gardner method for KdV starts with the B\"acklund transformation.  The
B\"acklund transformation states that if $u$ solves KdV then so does $u+v_x$ 
where $v$ is a solution of the system 
\begin{eqnarray*}
v_x &=& \theta - 2u - v^2  \\
v_t &=& -\frac12 u_{xx} + \left(\theta+u\right)\left(\theta - 2u - v^2\right) + u_x v
\end{eqnarray*}
It is straightforward to check that if $u$ solves KdV then these two equations 
for $v$ are consistent (i.e. $(v_x)_t=(v_t)_x$) and also that if $v$ is defined by these
two equations then $u+v_x$ does indeed satisfy KdV. Here $\theta$ is a parameter. 
The next thing to do is to observe that if we could solve the first equation of the 
B\"acklund transformation to write $v$ as a function of $u$
and (a finite number of) its  $x$-derivatives, then we would have the following 
conservation law:   
$$ 
\partial_t v + \partial_x \left( \frac12 u_x - (u+\theta)v  \right)   
=0\ . $$
This cannot be done explicitly, but it is possible to write a formal 
solution of the first equation of the B\"acklund transformation to give $v$ in
terms of $u$ as a formal series in decreasing powers of $\theta^{1/2}$. The
first few terms of the relevant series are 
$$ 
v = \theta^{1/2}
- \frac{u}{\theta^{1/2}}
+ \frac{u_x}{2 \theta}
- \frac{u_{xx}+2u^2}{4\theta^{3/2}}
+ \frac{u_{xxx}+8uu_x}{8\theta^{2}}
- \frac{u_{xxxx}+8u^3+10u_x^2+12uu_{xx}}{16\theta^{5/2}}
+ O\left(\theta^{-3}\right)\ .
$$ 
Each coefficient in this expansion gives (the $G$ component of) 
a conservation law. More precisely, the 
coefficients of integer powers of $\theta$ give trivial conservation laws, and the 
coefficients of half integer powers give the ``$G$'' components of nontrivial 
conservation laws. The ``$F$'' components can be found  from the corresponding
coefficient in the expansion of $\frac12 u_x - (u+\theta)v$. Examining in detail the 
way in which the terms of the above series are generated, it can be shown
that the coefficient of $\theta^{-n+1/2}$ has a term proportional to $u^n$, for 
$n=1,2,3,\ldots$ and thus the corresponding conservation law is  nontrivial 
\cite{DrJo}.

We now try to reproduce this for dKdV. 
The B\"acklund transformation for dKdV \cite{AHN,Je0}
is $u\rightarrow \tilde{u}$ where
\begin{eqnarray*}
(\tilde{u}_{0,0} - u_{0,1})(u_{0,0}-\tilde{u}_{0,1}) &=& \theta-\beta \ , \\
(\tilde{u}_{0,0} - u_{1,0})(u_{0,0}-\tilde{u}_{1,0}) &=& \theta-\alpha \ .
\end{eqnarray*}
Here $\theta$ is a parameter. Once again, by a B\"acklund transformation 
we mean two things: that the above equations for $\tilde{u}$ are 
consistent if $u$ satisfies dKdV, and that $\tilde{u}$ defined by these 
equations also satisfies dKdV. 
As in the case of continuum KdV, we cannot in general solve the equations of the 
B\"acklund transformation to write $\tilde{u}$ in terms of $u$.
However there are  several special cases. In the 
case $\theta=\beta$ we can take $\tilde{u}_{0,0} = u_{0,1}~{\rm or}~u_{0,-1}$, 
and in the 
case $\theta=\alpha$ we can take $\tilde{u}_{0,0} = u_{1,0}~{\rm or}~u_{-1,0}$. 
For $\theta$ near these special values we can find series solutions.
Consider the case $\theta=\alpha+\epsilon$ where $\epsilon$ is small, and look for
a solution of the B\"acklund transformation in the form 
$$ \tilde{u}_{0,0} = u_{1,0} + \sum_{i=1}^\infty v^{(i)}_{0,0} \epsilon^i \ . $$
We just look at the second equation of the B\"acklund transformation. This reads
$$ \epsilon = 
  \left(  \sum_{i=1}^\infty v^{(i)}_{0,0} \epsilon^i  \right) 
  \left(  u_{0,0}- u_{2,0} - \sum_{i=1}^\infty v^{(i)}_{1,0} \epsilon^i  \right) 
\ . $$
The leading order approximation gives 
\begin{equation} 
v^{(1)}_{0,0} = \frac1{u_{0,0}-u_{2,0}} \ .  \label{v1} 
\end{equation}
Higher order terms give 
\begin{equation}  v^{(i)}_{0,0} = \frac1{u_{0,0}-u_{2,0}} 
   \sum_{j=1}^{i-1} v^{(j)}_{0,0} v^{(i-j)}_{1,0} 
  \ , \qquad i=2,3,\ldots \ . \label{v2} 
\end{equation}
We note that all these formulas are on the horizontal line, i.e. all the $v^{(i)}_{0,0}$ 
only depend on values of $u_{n,m}$ with $m=0$. 
More precisely, $v^{(i)}_{0,0}$ depends only on 
$u_{n,0}$ with $0\le n\le (i+1)$, is homogeneous of degree $1-2i$ in these variables, and 
only depends on these variables through the combinations 
$u_{2,0}-u_{0,0},u_{3,0}-u_{1,0},\ldots,u_{i+1,0}-u_{i-1,0}$. 

As in the case of continuum KdV an infinite sequence of conservation laws can 
be obtained starting from the $\epsilon$ expansion of  a single ``conservation law''
written in terms of $u$ and $\tilde{u}$. It is straightforward to check that if we define 
$$  F=-\ln\left(\tilde{u}_{0,0} - u_{0,1} \right) \ , \quad 
    G=\ln\left(\tilde{u}_{0,0} - u_{1,0}  \right)   
$$     
then 
$$ (S_l-I)G  + (S_k-I) F = 
    \ln
\frac{\left(\tilde{u}_{0,0} - u_{0,1} \right)\left(\tilde{u}_{0,1} - u_{1,1}  \right)   }
     {\left(\tilde{u}_{1,0} - u_{1,1} \right)\left(\tilde{u}_{0,0} - u_{1,0}  \right)   }  
= 0 \ .
$$

It only remains to explicitly expand $F$ and $G$ in powers of $\epsilon$. We have 
$$ F= -\ln\left(u_{1,0} - u_{0,1} + \sum_{i=1}^\infty v^{(i)}_{0,0} \epsilon^i  \right) 
 = -\ln\left(u_{1,0} - u_{0,1}\right) - \ln\left( 1 + 
     \frac1{u_{1,0} - u_{0,1}}\sum_{i=1}^\infty v^{(i)}_{0,0} \epsilon^i  \right) \ , $$
\begin{equation}
G= \ln\left( \sum_{i=1}^\infty v^{(i)}_{0,0} \epsilon^i  \right) 
 = \ln\epsilon - \ln\left(u_{0,0}-u_{2,0}\right)  + \ln\left( 1 + 
   \frac1{v^{(1)}_{0,0}}\sum_{i=1}^\infty v^{(i+1)}_{0,0} \epsilon^i  \right)\ . 
\label{G} \end{equation}
Writing $F=\sum_{i=0}^\infty F_i\epsilon^i$, $G=\ln\epsilon+\sum_{i=0}^\infty G_i\epsilon^i$ 
and introducing the notation  
$$A_i=S_k^i\left(\frac1{u_{0,0}-u_{2,0}}\right)\ , \quad  i=0,1,2,\ldots \ ,   
  \qquad B=\frac1{u_{1,0}-u_{0,1}}\ , $$
we obtain 
\begin{eqnarray}
&&
\left\{ 
\begin{array}{l}
F_0 = \ln B  \\ 
G_0 = \ln A_0 
\end{array}
\right. \ ,   \nonumber \\
&&
\left\{ 
\begin{array}{l}
F_1 = -BA_0  \\  
G_1 = A_0A_1  
\end{array}
\right. \ ,   \nonumber \\
&&
\left\{ 
\begin{array}{l}
F_2 =  -A_0^2A_1B + \frac12 A_0^2B^2  \\ 
G_2 =  A_0A_1^2A_2 + \frac12 A_0^2A_1^2
\end{array}
\right. \ ,  \nonumber \\
&&\left\{
\begin{array}{l}
F_3 =  -A_0A_1^2A_2^2B -  A_0^3A_1^2B + A_0^3A_1B^2 - \frac13A_0^3B^3   \\
G_3 =  A_0A_1^2A_2^2A_3 + A_0^2A_1^3A_2 + A_0A_1^3A_2^2 + \frac13A_0^3A_1^3 \\  
\end{array}\right. \quad {\rm etc.}  \label{FsGs} 
\end{eqnarray}
For $i>0$, $F_i$ is homogeneous of degree $2i$ in the variables $A_0,A_1,\ldots,A_{i-1},B$
and $G_i$ is homogeneous of degree $2i$ in the variables $A_0,A_1,\ldots,A_{i}$. 
We note that in all the conservation laws that we have computed, the $G$ components
are a sum of terms of the form ``$p_kp_{k+1}$''. Thus, for example, in $G_3$ there 
are terms of the form $A_0^2A_1^3A_2$  (take $p_k=A_0^2A_1$)  and 
$A_0A_1^3A_2^2$ (take $p_k=A_0^1A_1^2$) but no terms of the form 
$A_0^3A_1^2A_2$ or $A_0^2A_1^2A_2^2$. 

Thus we see how expansion of the B\"acklund transformation around the point
$\theta=\alpha$ yields an infinite sequence of conservation laws on the horizontal
line for dKdV. Expansion around $\theta=\beta$ yields an infinite sequence of
conservation laws on the vertical line. We have not yet given a proof that the 
conservation laws we have found by the Gardner method are all nontrivial, 
but this will emerge from analysis of the continuum limit in section 5. 

\section{The symmetry method for dKdV Conservation laws} 

We now turn to the symmetry method for generating dKdV conservation laws, 
as proposed by Rasin and Hydon \cite{RH3}.
The method proceeds by the repeated application of a certain symmetry 
to a certain basic conservation law.

We start by reveiwing the necessary theory \cite{Hy1,RH1}.   
An infinitesimal symmetry for the quad-graph  equation (\ref{eq2o})
is an infinitesimal transformation of the form 
\begin{eqnarray} 
u_{0,0} &\rightarrow& \hat{u}_{0,0} = u_{0,0}+\epsilon Q(k,l,u,{\bf a})
          +O(\epsilon^2)\ ,  \label{TT1} \\
{\bf a} &\rightarrow& \hat{\bf a} = {\bf a}+\epsilon \xi({\bf a}) 
          +O(\epsilon^2)\ ,  \nonumber 
\end{eqnarray} 
that maps solutions into solutions. 
The functions $Q$ and $\xi$  are called the characteristics of the symmetry;
the function $Q$ depends on finitely many shifts of $u_{0,0}$.
The symmetry is often written in the form 
\begin{equation}
X=Q\frac{\partial}{\partial u_{0,0}}  + \xi\cdot\frac{\partial}{\partial {\bf a}} \ ,
\label{X}\end{equation}
which is also referred to as the symmetry generator. 
By shifting (\ref{TT1}) in the $k$ and $l$ directions we obtain that under the 
symmetry
\begin{equation*}
u_{i,j} ~\rightarrow~ 
\hat{u}_{i,j}=u_{i,j}+\epsilon S_k^iS_l^jQ+O(\epsilon^2),
\end{equation*}
for every $i,j\in\mathbb{Z}$. Expanding (\ref{eq2o}) to first
order in $\epsilon$ yields the symmetry condition
\begin{equation*}
 \hat{X}(P)=0\qquad\text{whenever (\ref{eq2o}) holds,}
\end{equation*}
where $\hat{X}$ is the ``prolonged''  generator:
\begin{equation}
\hat{X}=\sum_{i,j}S_k^{i}S_l^{j}(Q)\frac{\partial}{\partial u_{i,j}} 
+ \xi\cdot\frac{\partial}{\partial {\bf a}} \ . 
\label{XX}\end{equation}
In the sequel we will not distinguish between the generator of 
a symmetry and its prolongation,  typically
from the context it is clear which one is meant.  

$X_m$ is a {\em mastersymmetry} \cite{Fe0,Fu0,Ra0,SK0} for the symmetry $X$ if it
satisfies
\begin{equation}
\left[X_m,X\right]\neq0,\qquad\left[\left[X_m,X\right],X\right]=0 \ .
\end{equation}
Here $\left[\cdot ,\cdot\right]$ denotes the commutator. 

Given a conservation law, a new conservation law can be obtained by applying a 
symmetry generator.
This is possible since the (prolonged) symmetry generator commutes with shift operators, i.e.
\[
[{X},S_k]=0,~~~~[{X},S_l]=0\ .
\]
So if $F$ and $G$ are components of a conservation law, i.e. 
\[
(S_k-I)F+(S_l-I)G=0|_{P=0}\ , 
\]
then by applying the symmetry generator we obtain
\[
0={X}((S_k-I)F)+{X}((S_l-I)G)|_{P=0}=(S_k-I){X}(F)+(S_l-I){X}(G)|_{P=0}\ . 
\]
Thus  
\[F_{new}={X}(F),~~~~G_{new}={X}(G),\]
is also a conservation law.

Known symmetry generators for dKdV (\ref{eq01}) include 
\begin{align}
&X_0=\frac{1}{u_{1,0}-u_{-1,0}}\frac{\partial}{\partial u_{0,0}}, 
&\quad&
Y_0=\frac{1}{u_{0,1}-u_{0,-1}}\frac{\partial}{\partial u_{0,0}},
&\nonumber\\
&X=\frac{k}{u_{1,0}-u_{-1,0}}\frac{\partial}{\partial u_{0,0}}-\partial_{\alpha},
&\quad&
Y=\frac{l}{u_{0,1}-u_{0,-1}}\frac{\partial}{\partial u_{0,0}}-\partial_{\beta}.
&\label{SH1}
\end{align}
(The rationale for the notation here should become clear later.) 
Known conservation laws include 
\begin{eqnarray}
&& F=\ln(u_{0,1}-u_{-1,0})\ ,~G=-\ln(u_{1,0}-u_{-1,0})\ ,\nonumber\\
&& \bar{F}=\ln(u_{0,1}-u_{0,-1})\ ,~ \bar{G}=-\ln(u_{1,0}-u_{0,-1})\ ,\nonumber\\
&& \tilde{F}=kF+l\bar{F}\ ,~ \tilde{G}=kG+l\bar{G}\ .\label{CLH1}
\end{eqnarray}
(The first of these is the leading order conservation law found by the 
Gardner method in the previous section. The second is the leading order 
conservation law found using the Gardner method expanding the B\"acklund 
transformation around $\theta=\beta$.)

The Rasin-Hydon method for constructing an infinite sequence of conservation
laws is as follows: 

\begin{theorem}\label{theorem1}
The dKdV equation has an infinite number of nontrivial conservation laws on the 
horizontal line, generated by repeated application of the symmetry $X$ to the 
conservation law with components $(F,G)$, and  
an infinite number on the vertical line, generated by repeated application of the 
symmetry $Y$ to the conservation law with components $(\bar{F},\bar{G})$. 
\end{theorem}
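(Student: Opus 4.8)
The plan is to prove the statement for the horizontal line; the vertical case then follows by the reflection $k\leftrightarrow l$, $\alpha\leftrightarrow\beta$, $S_k\leftrightarrow S_l$, $X\leftrightarrow Y$, under which \eqref{eq01}, the symmetry $X$ and the conservation law $(F,G)$ are interchanged with their vertical analogues. Throughout write $G_0=-\ln(u_{1,0}-u_{-1,0})$, $Q=k/(u_{1,0}-u_{-1,0})$, $w_m=1/(u_{m+1,0}-u_{m-1,0})$ and $G_n=\hat X^nG_0$, and let $E$ denote the discrete Euler operator. First I would verify that each $(\hat X^nF,G_n)$ is a conservation law \emph{on the horizontal line}: that it is a conservation law was recalled just before the theorem, and that $G_n$ involves only the $u_{m,0}$ follows by induction, since $Q$ involves only $u_{1,0},u_{-1,0}$, so $S_k^iS_l^jQ$ involves only $u_{i+1,j},u_{i-1,j}$, and hence in $\hat X(G_n)=\sum_{i,j}(S_k^iS_l^jQ)\,\partial G_n/\partial u_{i,j}-\partial_\alpha G_n$ every term with $j\neq0$ drops out, while $\partial_\alpha G_n=0$ because $Q$ and, inductively, $G_n$ do not involve $\alpha$. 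In particular the operator effectively acting on this sequence is $\mathbf v:=\sum_i(k+i)w_i\,\partial/\partial u_{i,0}$.

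Next I would reduce everything to a single nonvanishing statement for $E$. By Lemma~\ref{lemma1} and the remark after it, a conservation law on the horizontal line with $G$-component $H$ is trivial only if $H=(S_k-I)f$ with $f$ on the line, in which case $E(H)=0$; so if $E(H)\neq0$ then it is nontrivial. As $E$ is linear, $E(\sum_n c_nG_n)=\sum_n c_nE(G_n)$, so it suffices to show the $E(G_n)$ are linearly independent over $\mathbb R$; and this in turn reduces to $E(G_n)\neq0$ for each $n$, because a homogeneity count (under $u_{m,0}\mapsto\lambda u_{m,0}$, $\mathbf v$ scales by $\lambda^{-2}$ and $\partial G_0/\partial u_{m,0}$ by $\lambda^{-1}$) shows $E(G_n)$ is homogeneous of degree $-(2n+1)$, and vectors lying in pairwise distinct graded components are independent precisely when each is nonzero.

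To get at $E(G_n)$ I would use the identity $E(\mathbf v H)=\mathbf v(E H)+D_Q^*(E H)$, where $D_Q=\sum_m(\partial Q/\partial u_{m,0})S_k^m$ is the Fr\'echet derivative of $Q$ and $D_Q^*$ its formal adjoint (a short direct computation in the discrete variational calculus). Since $E(G_0)=w_1-w_{-1}$, this gives $E(G_n)=\mathcal L^n(w_1-w_{-1})$, with $\mathcal L=\mathbf v+(k+1)w_1^2S_k-(k-1)w_{-1}^2S_k^{-1}$, so the theorem reduces to $\mathcal L^n(w_1-w_{-1})\neq0$ for all $n$. It is worth noting why a cruder argument fails: writing $\mathbf v=k\hat X_0+(\text{a }k\text{-free operator})$ with $\hat X_0=\sum_i w_i\,\partial/\partial u_{i,0}$ the line-restriction of the prolonged symmetry $X_0$, the coefficient of $k^n$ in $G_n$ is $\hat X_0^nG_0$, which is a total $k$-difference because $X_0$ is an honest symmetry ($\hat X_0G_0=(S_k-I)(-w_0w_{-1})$) and commutes with $S_k$; the density that survives modulo trivial conservation laws comes only from the interaction of the explicit $k$'s with such total differences, via $k(S_k-I)h=(S_k-I)(kh)-S_kh$, which is exactly why $X=kX_0-\partial_\alpha$, rather than $X_0$, produces something new.

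The technical heart --- and the step I expect to be the main obstacle --- is then an edge-monomial induction. Put $M_0:=w_1$ and $M_n:=w_1^2w_2^2\cdots w_n^2\,w_{n+1}$ for $n\ge1$; the claim to establish by induction is that $M_n$ occurs in $E(G_n)=\mathcal L^n(w_1-w_{-1})$ with a nonzero coefficient $c_n$, and is the only monomial of $E(G_n)$ whose support contains the extreme right-hand variable $u_{n+2,0}$. The induction is driven by the fact that $\mathbf v$ and every term of $\mathcal L$ extend the right-most reach of a monomial by at most one index: so the only monomial of $E(G_{n-1})$ that can contribute a $u_{n+2,0}$-term under $\mathcal L$ is $M_{n-1}$ (unique by the inductive hypothesis), and when $\mathcal L$ acts on $M_{n-1}$ only the term $(k+1)w_1^2S_k$, contributing $(k+1)M_n$, and the single summand $j=n+1$ of $\mathbf v$, contributing $-(k+n+1)M_n$, reach $u_{n+2,0}$; their sum is $-n\,M_n$, so $c_n=-n\,c_{n-1}$ and $c_n=(-1)^n n!\neq0$. (For $n=1$ this is already transparent: $E(G_1)=w_{-1}^2(w_0+w_{-2})-w_1^2(w_0+w_2)$, and $M_1=w_1^2w_2$ is uncancellable.) Making the support and uniqueness bookkeeping fully rigorous --- so that one is sure $M_n$ is never produced a second time with the opposite sign, and that no other monomial of $E(G_n)$ reaches $u_{n+2,0}$ --- is where the real work lies; once it is done we have $E(G_n)\neq0$ for all $n$, hence by the reductions above the conservation laws $(\hat X^nF,G_n)$ are distinct and nontrivial, the horizontal case is complete, and the vertical case follows by the reflection noted at the start.
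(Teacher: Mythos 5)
Your proposal is correct in outline and, up to the decisive step, follows the same architecture as the paper: restrict to the horizontal line, invoke Lemma~\ref{lemma1} to reduce triviality to $G_n=(S_k-I)f$ and hence to $E(G_n)=0$, use homogeneity to convert nontriviality into distinctness, and derive the recursion for $E(G_n)$ under application of $X$ (your identity $E(\mathbf v H)=\mathbf v(EH)+D_Q^{*}(EH)$ is exactly the identity the paper establishes by its long explicit induction computation; your operator $\mathcal L=\mathbf v+(k+1)w_1^2S_k-(k-1)w_{-1}^2S_k^{-1}$ agrees with the paper's $(S^{-1}-S)\bigl(\tfrac{\partial Q}{\partial u_1}\,\cdot\,\bigr)+X(\cdot)$). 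Where you genuinely diverge is in how the recursion is shown to produce something nonzero. The paper closes the induction structurally: using $Q_r=X(Q_{r-1})-X_{r-1}(Q)$ it proves the clean identity $E(G_n)=(S_k-S_k^{-1})Q_n$, where $Q_n$ is the characteristic of the $n$-th symmetry in the hierarchy generated by the mastersymmetry $X$ acting on $X_0$, and then imports from \cite{RH1} the fact that these symmetries are all distinct and nonzero. Your route instead unwinds $E(G_n)=\mathcal L^n(w_1-w_{-1})$ combinatorially by tracking the unique ``edge'' monomial $M_n=w_1^2\cdots w_n^2w_{n+1}$ reaching $u_{n+2,0}$, obtaining $c_n=(-1)^n n!$. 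I checked the arithmetic: your $E(G_1)$ agrees with a direct computation, the two contributions $(k+1)M_n$ and $-(k+n+1)M_n$ are the only ones reaching $u_{n+2,0}$, and the bookkeeping you flag as the remaining work does go through (the $w_j$ are algebraically independent, $\mathcal L$ preserves the polynomial ring in the $w_j$ and extends the maximal index by at most one, and $u_{n+2,0}$ can only enter through $w_{n+1}$). What each approach buys: the paper's identity ties the conservation laws to the symmetry hierarchy --- conceptually illuminating and short, but it leans on a previously established external result --- whereas yours is self-contained and even yields the explicit leading coefficient, at the cost of the support-tracking argument you correctly identify as the technical heart.
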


\begin{proof}
We look at the horizontal case, the vertical case  is similar. 

It is known \cite{RH1} that equation (\ref{eq01}) has  infinite hierarchies of symmetries 
in both the $k$ and $l$ directions. The symmetries in the $k$ direction can be obtained by 
repeatedly commuting $X$ with $X_0$:
\[
X_{1}=\left[X,X_0\right],~~~X_{2}=\left[X,X_{1}\right],
~~\ldots \ .\]
(Similarly the symmetries in the $l$ direction are obtained by commuting $Y$
with $Y_0$.) 
The generator of $X_n$ is $Q_n\frac{\partial}{\partial u_{0,0}}$ where the 
characteristic $Q_n$ depends (at most) on the $2n+3$ variables
$u_{-n-1,0},u_{-n,0},\ldots,u_{n,0},u_{n+1,0}$. 
In particular, note that $Q_n$ does not 
depend on $k$;  $X$ does, but since $X$ is a mastersymmetry for $X_0$, 
the $k$-dependence disappears on forming the 
necessary commutators. All the ${X}_{n}$ are different (linearly independent).  

Let us denote 
\[  F_n=X^{n}(F)\ , \quad G_n=X^{n}(G)\ , \quad n=0,1,2,\ldots\ . \]
From the forms of $X$ and $G$ it follows that $G_n$ depends (at most) on $k$ and the 
$2n+3$ variables
$u_{-n-1,0},u_{-n,0},\ldots,u_{n,0},u_{n+1,0}$. 
It is homogeneous of order  $-2n$ in the $u$ variables. From this homogeneity 
it follows that so long as the $G_n$ are nontrivial 
then they will also not be dependent. Furthermore the $G_n$ are valued on the 
horizontal line. By the lemma of section 2 it follows that if the conservation
law $G_n$ is trivial we must $G_n=(S_k-I)f_n$ for some function $f_n$ valued on 
the line, and therefore $E(G_n)=0$, where $E$ is the Euler operator. Thus we
can prove nontriviality by verifying that $E(G_n)\not=0$. We now show that  
\begin{equation}
E(G_n) =  (S_k-S_k^{-1})Q_n,~~~n=0,1,2,\ldots \ ,
\label{d151}\end{equation}
where $Q_n$ is the characteristic of the symmetry generator $X_n$
introduced above. Since all the necessary quantities that appear are
valued on the horizontal line, we drop the vertical index 
on values of $u$ in all the calculations that follow, and denote the 
horizontal shift simply as $S$ (instead of $S_k$). 

First we verify (\ref{d151}) in the case $n=0$. Using $G_0 = - \ln(u_1 - u_{-1})$ 
and $Q_0=\frac1{u_1-u_{-1}}$ we obtain 
$$ E(G_0) = 
   S^{-1} \frac{\partial G_0}{\partial u_1}
 +  S \frac{\partial G_0}{\partial u_{-1}} 
= - \frac1{u_0-u_{-2}} + \frac1{u_2-u_0} = (S-S^{-1})Q_0 $$ 
as desired. 

Now assume that (\ref{d151}) is true for $n=r-1$. We have 
\begin{eqnarray}
E(G_r)
&=& E(X(G_{r-1})) \nonumber \\
&=& E\left( \sum_{i=-r}^r (S^iQ) \frac{\partial G_{r-1}}{\partial u_i} \right) \nonumber \\ 
&=& \sum_{j=-r-1}^{r+1} \sum_{i=-r}^r 
    S^{-j} \frac{\partial}{\partial u_j} 
   \left[ (S^iQ) \frac{\partial G_{r-1}}{\partial u_i} \right]  \nonumber \\ 
&=& \sum_{j=-r-1}^{r+1} \sum_{i=-r}^r \left[
\left( S^{-j} \frac{\partial(S^iQ) }{\partial u_j} \right) 
\left( S^{-j} \frac{\partial G_{r-1}}{\partial u_i}  \right) 
+ \left(S^{i-j}Q \right) S^{-j} \left( \frac{\partial^2 G_{r-1}}{\partial u_i\partial u_j}  
     \right) \right] \nonumber \\ 
&=& \sum_{j=-r-1}^{r+1} \sum_{i=-r}^r \left[
\left( S^{i-j} \frac{\partial  Q}{\partial u_{j-i}} \right) 
\left( S^{-j} \frac{\partial G_{r-1}}{\partial u_i}  \right) 
+ \left(S^{i-j}Q \right) S^{-j} \left( \frac{\partial^2 G_{r-1}}{\partial u_i\partial u_j}  
     \right) \right] \label{big}
\end{eqnarray} 
Here $Q$ denotes the characteristic of the symmetry $X$. Note that in the second 
term in (\ref{big}), the terms with $j=-r-1$ and $j=r+1$ do not contribute. In fact the 
second term is precisely $X(E(F_{r-1}))$, as the following calculation shows:
\begin{eqnarray*}
X(E(G_{r-1})) 
&=&  X\left( \sum_{j=-r}^r S^{-j}  \frac{\partial G_{r-1}}{\partial u_j}  
   \right)  \\
&=&  \sum_{j=-r}^r \sum_{k=-r-j}^{r-j} (S^k Q) \frac{\partial}{\partial u_k}  
   \left(S^{-j} \frac{\partial G_{r-1}}{\partial u_j}  \right)  \\
&=& \sum_{j=-r}^r \sum_{k=-r-j}^{r-j} (S^k Q) S^{-j} 
     \frac{\partial^2 G_{r-1}}{\partial u_j\partial u_{k+j}} \ .
\end{eqnarray*}
The last expression is seen to be the same as the second term in (\ref{big}) by
replacing the summation variable $k$ by $i=k+j$. With regard to the first term in 
(\ref{big}), note that $Q$ only depends on $u_1$ and $u_{-1}$ and so there 
are only nonzero contributions when $j=i+1$ or $j=i-1$. Thus we have  
\begin{eqnarray*}
E(G_r)
&=& 
   \sum_{i=-r}^r   \left( S^{-1} \frac{\partial  Q}{\partial u_{1}} \right) 
                  \left( S^{-i-1} \frac{\partial G_{r-1}}{\partial u_i}  \right) 
+  \sum_{i=-r}^r   \left( S \frac{\partial  Q}{\partial u_{-1}} \right) 
                  \left( S^{-i+1} \frac{\partial G_{r-1}}{\partial u_i}  \right) \nonumber\\
&& +    X(E(G_{r-1}))   \\
&=& \left(  S^{-1} \frac{\partial  Q}{\partial u_{1}} \right) S^{-1} E(G_{r-1})
+  \left(  S      \frac{\partial  Q}{\partial u_{-1}}\right)   S E(G_{r-1}) 
+    X(E(G_{r-1}))    \\ 
&=&   (S^{-1}-S)\left(    \frac{\partial  Q}{\partial u_{1}} E(G_{r-1})  \right) 
+    X(E(G_{r-1}))  \ .
\end{eqnarray*} 
In the last line we have used the fact that 
$$   \frac{\partial  Q}{\partial u_{-1}}
  = -\frac{\partial  Q}{\partial u_{1}}\  .  $$
Now we use the induction hypothesis $E(G_{r-1}) =  (S-S^{-1})Q_{r-1}$. 
Since shift operators commute with any prolonged symmetry operator we obtain at
once that 
\begin{eqnarray*}
E(G_r)
&=& (S-S^{-1}) \left(  
-\frac{\partial  Q}{\partial u_{1}} (S-S^{-1})Q_{r-1} + X(Q_{r-1}) 
\right)   \\
&=& (S-S^{-1}) \left(  
-(S Q_{r-1}) \frac{\partial  Q}{\partial u_{1}} 
- (S^{-1} Q_{r-1}) \frac{\partial  Q}{\partial u_{-1}} 
+ X(Q_{r-1}) \right)  \\  
&=& (S-S^{-1}) \left(    - X_{r-1}(Q) 
+ X(Q_{r-1}) \right)  
\end{eqnarray*}
But since $X_r=[X,X_{r-1}]$, $Q_r=X(Q_{r-1})-X_{r-1}(Q)$. Thus we have 
$E(G_r)=(S-S^{-1})Q_r$, as required, providing the induction step for 
our claim that $E(G_n)=(S-S^{-1})Q_n$ for all $n$. In particular $E(G_n)\not=0$,
giving nontriviality of the conservations laws with components $F_n,G_n$. 
\end{proof}

\section{The continuum limit} 
Let us consider the continuum limit of (\ref{eq01}).
By replacing
\[u_{i,j}=u(x+ih,t+jh),~~~\alpha=\alpha(h),~~~\beta=\beta(h),\]
and dividing equation (\ref{eq01}) by $h^2$ we obtain
\begin{equation}
\frac{(u(x+h,t+h)-u(x,t))(u(x+h,t)-u(x,t+h))}{h^2}=\frac{\alpha(h)-\beta(h)}{h^2}.\label{eq2}
\end{equation}
Taking the limit of (\ref{eq2}) as $h\rightarrow 0$ we obtain 
\begin{equation}
u_x^2-u_t^2=C,\label{eq4}
\end{equation}
where $C=\lim_{h\rightarrow 0}\frac{\alpha(h)-\beta(h)}{h^2}$, assuming the limit
exists. For brevity we call this limit the continuum limit.
It is clear that the continuum limits of symmetries and conservation laws for 
(\ref{eq01}) give symmetries and conservation laws  for (\ref{eq4}).
For example, the continuum limits of the symmetries in (\ref{SH1}) are
\begin{align}
&X_0=\frac{1}{2u_x}\partial_u,
&\quad&
Y_0=\frac{1}{2u_t}\partial_u,
&\nonumber\\
&X=\frac{x}{2u_x}\partial_{u}+\partial_{C},
&\quad&
Y=\frac{t}{2u_t}\partial_{u}+\partial_{C}.
&\label{d177}
\end{align}
The continuum limits of the conservation laws in (\ref{CLH1}) are
\begin{eqnarray*}
&& F=\ln(u_x+u_t)\ ,~G=-\ln(u_x)\ ,\nonumber \\
&& \bar{F}=\ln(u_t)\ ,~ \bar{G}=-\ln(u_x+u_t)\ ,\nonumber\\
&& \tilde{F}=xF+t\bar{F}\ ,~ \tilde{G}=xG+t\bar{G}\ .\label{CLH1c}
\end{eqnarray*}

The continuum limit of the Rasin-Hydon construction of conservation laws is
as follows:

\begin{theorem}\label{theorem2}
The equation (\ref{eq4}) has an infinite number of distinct, nontrivial conservation laws
generated by repeated application of the symmetry $X$ to the 
conservation law with components $(F,G)$. Writing $F_n=X^n(F)$, $G_n=X^n(G)$ 
we find $G_n=\frac1{u_x^{2n}}$ for $n\ge 1$ (up to addition 
of a trivial conservation law and rescaling). 
\end{theorem}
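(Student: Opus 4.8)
The plan is to work directly with the eikonal equation (\ref{eq4}) and compute the action of the mastersymmetry $X=\frac{x}{2u_x}\partial_u+\partial_C$ iteratively on the pair $(F,G)=(\ln(u_x+u_t),-\ln u_x)$, tracking only the $G$-component since by Lemma \ref{lemma1} (which, as noted, holds for quad-graph equations in a weak sense and has an obvious continuum analogue) a conservation law on the horizontal line is trivial iff its $G$-component is a total $x$-derivative, and by the homogeneity argument of Theorem \ref{theorem1} distinctness follows once nontriviality is established. First I would record the prolongation of $X$: since $u_x\to u_x+\epsilon(\frac{x}{2u_x})_x+O(\epsilon^2)$ and similarly for $u_t$, the prolonged generator acts on functions of $x,t,u,u_x,u_t,\ldots$ by $\hat X = \frac{x}{2u_x}\partial_u + \partial_x\!\left(\frac{x}{2u_x}\right)\partial_{u_x} + \partial_t\!\left(\frac{x}{2u_x}\right)\partial_{u_t} + \cdots + \partial_C$. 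Because $G=-\ln u_x$ depends only on $u_x$, the first application gives $G_1 = \hat X(-\ln u_x) = -\frac1{u_x}\partial_x\!\left(\frac{x}{2u_x}\right) = -\frac1{u_x}\left(\frac1{2u_x} - \frac{x u_{xx}}{2u_x^2}\right)$. The term $-\frac1{2u_x^2}$ is, up to a constant multiple, exactly $\partial_x$ of... no — one checks $\frac1{u_x^2}$ is \emph{not} an $x$-derivative of an elementary local function, whereas the term involving $u_{xx}$ can be absorbed: $\frac{x u_{xx}}{2u_x^3} = \partial_x\!\left(-\frac{x}{4u_x^2}\right) + \frac1{4u_x^2}$, so $G_1 = -\frac1{4u_x^2} + (\text{total } x\text{-derivative})$. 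Hence modulo trivial conservation laws and the harmless rescaling $G_1\sim \frac1{u_x^2}$, matching the claimed $n=1$ case.

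Next I would set up the induction. The natural claim to prove by induction on $n\ge 1$ is that $G_n = c_n\,u_x^{-2n} + \partial_x(f_n)$ for some nonzero constant $c_n$ and some local function $f_n$, together with a companion statement pinning down enough of $F_n$ (or of the relation between $F_n$ and $G_n$ via the conservation-law identity $\partial_t G_n + \partial_x F_n = 0$ on solutions) to close the recursion. The cleanest route is probably not to carry $f_n$ explicitly but to use the continuum Euler operator $E_x$ (variational derivative with respect to $u$, in the $x$-direction): since $E_x$ annihilates total $x$-derivatives, $E_x(G_n) \neq 0$ certifies nontriviality, and one shows $E_x(G_{n+1})$ satisfies a recursion analogous to (\ref{d151}), namely $E_x(G_n) = (\text{something})\cdot Q_n$ where $Q_n$ is the characteristic of the $n$-th symmetry $X_n$ in the hierarchy (continuum limit of the discrete statement). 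Given the discrete identity $E(G_n)=(S-S^{-1})Q_n$ already proved in Theorem \ref{theorem1}, the honest shortcut is: the continuum limit of $E(G_n)=(S-S^{-1})Q_n$ is $E_x(G_n) = \partial_x(Q_n)$ (since $(S-S^{-1})\to 2h\partial_x$ and the leading orders match), so nontriviality of $G_n$ reduces to $Q_n\not\equiv 0$, i.e. to the statement that the symmetry hierarchy $X_n=[X,X_{n-1}]$ of (\ref{eq4}) does not terminate — which is the continuum limit of the already-cited fact \cite{RH1} that the discrete hierarchy does not terminate. This handles nontriviality and distinctness simultaneously (homogeneity degree $-2n$ separates them).

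It then remains to identify the explicit form $G_n = \frac1{u_x^{2n}}$ up to triviality and scaling. Here I would argue by a scaling/weight bookkeeping: assign $u$ weight $1$, each $x$-derivative weight $-1$; then $G=-\ln u_x$ is ``weight $0$ plus a log'', $X$ carries weight... one checks $\frac{x}{2u_x}\partial_u$ lowers the degree of homogeneity in the $u$-variables by $2$ per application (matching the discrete count ``homogeneous of order $-2n$''), and the only weight-$(-2n)$, $x$-derivative-free, rational-in-$u_x$ expression of the correct homogeneity that is not a total $x$-derivative is a multiple of $u_x^{-2n}$; every other monomial of the right weight (those involving $u_{xx},u_{xxx},\ldots$) can be integrated by parts down to lower-order terms, and an induction shows the accumulated lower-order terms are themselves total $x$-derivatives. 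Concretely one computes $G_2$ directly from $G_1$ to see the pattern and confirm the constant is nonzero, then runs the induction: apply $\hat X$ to $G_n=c_n u_x^{-2n}+\partial_x f_n$, note $\hat X$ commutes with $\partial_x$ so $\hat X(\partial_x f_n)=\partial_x(\hat X f_n)$ stays trivial, and $\hat X(c_n u_x^{-2n}) = c_n(-2n)u_x^{-2n-1}\partial_x(\frac{x}{2u_x}) = -nc_n u_x^{-2n-1}(u_x^{-1}-xu_{xx}u_x^{-2})$; the second piece integrates by parts to $-nc_n\big(\partial_x(\tfrac{x}{2(n+1)}u_x^{-2n-2})\cdot(\ldots)\big)$ producing $c_{n+1}u_x^{-2n-2}$ plus a total derivative, with $c_{n+1}=-nc_n(1+\tfrac1{?})\neq 0$. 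The main obstacle is exactly this last bookkeeping: verifying rigorously that \emph{all} the lower-order debris generated at each step (the terms with $u_{xx},u_{xxx},\dots$) collapses to a total $x$-derivative, so that the clean statement $G_n\sim u_x^{-2n}$ survives modulo triviality; the rest is either routine or already supplied by Theorem \ref{theorem1} and \cite{RH1}.
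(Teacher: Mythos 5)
Your proposal follows essentially the same route as the paper's proof: compute $G_1$ explicitly, then induct by showing $\hat X(c_n u_x^{-2n})$ equals a nonzero multiple of $u_x^{-2(n+1)}$ plus a single total $x$-derivative (the paper's constant is $c_{n+1}=-\tfrac{n(2n+1)}{2n+2}\,c_n$, which fills in your ``$?$''), with nontriviality read off from $u_x^{-2n}$ not being a total $x$-derivative and distinctness from the homogeneity degree. The ``main obstacle'' you flag at the end is not actually an obstacle: since $\hat X$ commutes with $D_x$, the trivial part $\partial_x f_n$ of $G_n$ remains trivial under $\hat X$, and $\hat X(c_nu_x^{-2n})$ produces exactly one non-derivative term and one exact term per step, so no higher-derivative debris ever accumulates and your induction already closes.
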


\begin{proof}
We have 
\begin{eqnarray*}
G_1 ~=~ X(G_0) &=& - \left( \frac{x}{2u_x}\right)_x  \partial_{u_x}  \left( \ln u_x \right) \\
       &=& -  \frac1{2u_x^2}  +    \frac{xu_{xx}}{2u_x^3}  \\
       &=& - \left( \frac{x}{4u_x^2} \right)_x -  \frac1{4u_x^2}\ .
\end{eqnarray*}
The first term on the RHS is the ``$G$'' component of a trivial conservation law,
the second  is a multiple of $\frac1{u_x^2}$,
proving the result for $n=1$. For $n\ge 1$ we have  
\begin{eqnarray*}
X\left(\frac1{u_x^{2n}}\right) 
&=&  \left( \frac{x}{2u_x}\right)_x  \partial_{u_x}  \left(\frac1{u_x^{2n}}\right)  \\
&=&  - \frac{n}{u_x^{2n+2}}  +   \frac{nxu_{xx}}{u_x^{2n+3}}  \\
&=& - \left(  \frac{nx}{(2n+2)u_x^{2n+2}}  \right)_x   - \frac{n(2n+1)}{(2n+2)u_x^{2n+2}}\ . 
\end{eqnarray*}
The first term on the RHS is the ``$G$'' component of a trivial conservation law 
and the second is a multiple of $\frac1{u_x^{2(n+1)}}$.
Thus the required form of $G_n$ is established by induction. 
$G_n$ is evidently not the $x$-derivative
of a function of $u$ and its $x$-derivatives and thus the conservations laws with 
components $F_n,G_n$ are all nontrivial and distinct. 
\end{proof}

Since it is impossible that distinct, nontrivial conservation laws be the 
continuum limit of conservation laws that are equivalent or trivial, 
this furnishes an alternative proof that the conservation
laws constructed by the symmetry method in the discrete case are 
distinct and nontrivial. 

For completeness we give a formula for $F_n$ for $n\ge 1$. From the definition of 
a conservation law for (\ref{eq4}) we have 
\begin{equation}
\frac{\partial F_n}{\partial x}+\frac{\partial G_n}{\partial t}=0\label{dcl1}
\end{equation}
on solutions of (\ref{eq4}). Look for $F_n$ as a function of $u_x$ alone. We then 
need 
$$ F_n'(u_x) u_{xx} - \frac{2nu_{xt}}{u_x^{2n+1}}  = 0 \ .$$ 
But any solution of (\ref{eq4}) also satisfies $u_{x}u_{xx} - u_t u_{tx} = 0$. Thus
$$ F_n'(u_x) = \frac{2nu_{xt}}{u_x^{2n+1}u_{xx}} =  \frac{2n}{u_x^{2n}u_t} 
    =   \frac{2n}{u_x^{2n}\sqrt{u_x^2-C}} \ , $$
and 
$$ F_n(u_x) = \int \frac{2n}{u_x^{2n}\sqrt{u_x^2-C}}\ du_x  \ . $$
These integrals can be computed using $F_1(u_x)=\frac{\sqrt{u_x^2-C}}{Cu_x}$  
and the recursion 
$$ F_n(u_x) = \frac{2n}{(2n-1)C}
\left(
F_{n-1}(u_x) + \frac{\sqrt{u_x^2-C}}{(2n-1)u_x^{2n-1}}
\right)
\ , \quad n>1\ .
$$
This results in an expression for $F_n(u_x)$ 
that is the product of a rational function of $u_x$ and $C$
with $\sqrt{u_x^2-C}$. $\sqrt{u_x^2-C}$ can then be replaced by $u_t$, and, if desired,
all occurences of $C$ can be replaced by $u_x^2-u_t^2$, giving a rational function
of $u_x$ and $u_t$.

\begin{theorem}\label{theorem3}
The continuum limit of the conservation laws constructed using the Gardner transformation
coincides with those constructed by the Rasin-Hydon method.
\end{theorem}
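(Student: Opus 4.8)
The plan is to compute the continuum limit of the $G$-component of the $i$-th Gardner conservation law, show that it is a nonzero multiple of $1/u_x^{2i}$ modulo a trivial conservation law, and then invoke Theorem~\ref{theorem2}, according to which this is exactly (up to the same rescaling and trivial additions) the continuum limit of the $i$-th Rasin--Hydon conservation law. Once the $G$-components are matched in this way, the corresponding $F$-components agree automatically up to a trivial piece, since (as in the proof of Theorem~\ref{theorem2} and as recalled in Section~2) a conservation law of (\ref{eq4}) whose $G$-component is an $x$-derivative on solutions is trivial; hence a conservation law of (\ref{eq4}) is determined by its $G$-component up to equivalence.

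First I would record the continuum behaviour of the building blocks. Substituting $u_{n,0}=u(x+nh,t)$ gives
\[
A_j=S_k^j\left(\tfrac1{u_{0,0}-u_{2,0}}\right)=\frac1{u(x+jh,t)-u(x+(j+2)h,t)}=\frac{-1}{2hu_x}\bigl(1+O(h)\bigr),
\]
so all the $A_j$ share the \emph{same} leading term $-1/(2hu_x)$, and similarly $B=1/(u_{1,0}-u_{0,1})=\bigl(h(u_x-u_t)\bigr)^{-1}\bigl(1+O(h)\bigr)=O(h^{-1})$. Since, as stated after (\ref{FsGs}), $G_i$ is homogeneous of degree $2i$ in $A_0,\dots,A_i$ and $F_i$ is homogeneous of degree $2i$ in $A_0,\dots,A_{i-1},B$, both components are $O(h^{-2i})$. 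Multiplying the conservation law by $h^{2i}$ and letting $h\to0$ then yields, by the standard argument for continuum limits of conservation laws, a conservation law of (\ref{eq4}); substituting the common leading value of the $A_j$ into the homogeneous polynomial $G_i$ gives
\[
\lim_{h\to0}h^{2i}G_i=\frac{c_i}{4^i\,u_x^{2i}},\qquad c_i:=G_i\big|_{A_0=\dots=A_i=1},
\]
the suppressed contributions being $O(h^{-2i+1})$; here $c_i$ is the sum of the coefficients of $G_i$ regarded as a polynomial in the $A_j$.

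It remains to check $c_i\neq0$. For this I would set $A_j\equiv1$ throughout, which is legitimate because (\ref{v2}) and (\ref{G}) are polynomial identities in the shifted $A_j$, with $S_k$ acting as $A_j\mapsto A_{j+1}$ and hence trivially once all $A_j=1$. Then (\ref{v1})--(\ref{v2}) reduce to $\hat v^{(1)}=1$, $\hat v^{(i)}=\sum_{j=1}^{i-1}\hat v^{(j)}\hat v^{(i-j)}$, so $\hat v^{(i)}=C_{i-1}$, the $(i-1)$st Catalan number, and (\ref{G}) gives
\[
\sum_{i\ge1}c_i\epsilon^i=\ln\!\bigl(1+\sum_{i\ge1}C_i\epsilon^i\bigr)=\ln g(\epsilon),\qquad g(\epsilon)=\sum_{n\ge0}C_n\epsilon^n=\frac{1-\sqrt{1-4\epsilon}}{2\epsilon}.
\]
From $g=1+\epsilon g^2$ one gets $g'/g=g^2/(2-g)$, and since $\sqrt{1-4\epsilon}=1-2\epsilon g=(2-g)/g$ this simplifies to $g'/g=g/\sqrt{1-4\epsilon}$, a product of two power series with strictly positive coefficients (the Catalan series and $\sum_{n\ge0}\binom{2n}{n}\epsilon^n$). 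Hence $\ln g=\int_0^\epsilon(g'/g)$ has strictly positive coefficients, so $c_i>0$ for every $i\ge1$ (in fact $c_i=\tfrac1{2i}\binom{2i}{i}$).

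Putting the pieces together, $\lim_{h\to0}h^{2i}G_i=\frac{c_i}{4^i}\cdot\frac1{u_x^{2i}}$ with $c_i\neq0$, which by Theorem~\ref{theorem2} is $\frac{c_i}{4^i}$ times the continuum limit of the Rasin--Hydon $G_i$ modulo a trivial conservation law; since conservation laws are counted only up to rescaling and addition of trivial ones, the two continuum limits coincide (the $i=0$ members obviously match as well, both limiting to $-\ln u_x$), and the $F$-components then agree by the remark in the first paragraph. The main obstacle I anticipate is not any single computation but the care needed in the continuum-limit bookkeeping: one must justify that the rescaled family $(h^{2i}F_i,h^{2i}G_i)$ really does converge term by term to a conservation law of (\ref{eq4}) — so that subleading-in-$h$ contributions are irrelevant and in particular cannot conspire to make the limit trivial — and one must be careful that the generating-function manipulation setting $A_j\equiv1$ genuinely computes the combinatorial quantity $c_i$ attached to the canonical polynomial form of $G_i$ produced by the recursion.
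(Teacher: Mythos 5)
Your proposal is correct and follows essentially the same route as the paper: take the continuum limit of the Gardner $G_i$, reduce the leading coefficient to the $i$-th coefficient of $\ln$ of the Catalan generating function, show it is nonzero (your $c_i=\tfrac1{2i}\binom{2i}{i}$ is exactly the paper's $H_i=\tfrac{(2i-1)!}{(i!)^2}$), and match against Theorem~\ref{theorem2}. The only differences are cosmetic: you establish $c_i\neq0$ by a positivity argument for $g'/g=g/\sqrt{1-4\epsilon}$ rather than the paper's integral identity for $\log c(t)$, and you are somewhat more explicit about the $h^{2i}$ rescaling and the matching of the $F$-components.
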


\begin{proof}
In the limit of small $h$, equation (\ref{v1}) gives 
$$ v^{(1)}_{0,0} \sim  - \frac{h}{2u_x} \ . $$
Equation (\ref{v2}) gives 
$$ v^{(i)}_{0,0} \sim  - C_{i-1} \left( \frac{h}{2u_x} \right)^{2i-1} 
  \ , \qquad i=2,3,\ldots \ , $$ 
where $C_n$ are the Catalan numbers, 
$$ C_n = \frac{(2n)!}{n!(n+1)!}\ ,  $$
which satisfy the recursion \cite{wiki1} 
$$ C_0=1 \ , \qquad C_{n+1} = \sum_{i=0}^n C_i C_{n-i}\ , \qquad n\ge 0\ . $$
Using these results in (\ref{G}) gives 
$$ G_n \sim  H_n \left( \frac{h}{2u_x}\right)^{2n} \ , n=1,2,\ldots  $$ 
where the numbers $H_n$ are defined by the generating function 
$$  \sum_{n=1}^\infty H_n t^n    
   = \ln\left( 1 + \sum_{n=1}^\infty C_n t^n  \right) 
   = \ln\left( \frac{1-\sqrt{1-4t}}{2t}  \right) \ ,
$$ 
where in the last equality we have used
the standard expression for the generating function for the 
Catalan numbers \cite{wiki1}. Remarkably, there is a simple 
formula for the $H_n$: 
$$ H_n = \frac{n+1}{2n}C_n = \frac{(2n-1)!}{(n!)^2}  \ , \quad n=1,2,\ldots \ . $$ 
This can be proved from the fact that if we denote the generating function 
of the Catalan numbers by $c(t)=\frac{1-\sqrt{1-4t}}{2t}$, then 
$$ \log(c(t))  = \frac{c(t)-1}{2}   + \int_0^t \frac{c(s)-1}{2s} \ ds  \ . $$
For our purposes, however, it is just necessary to observe that $H_n\not=0$, so, 
to leading order in $h$, $G_n$ is a nonzero multiple of $\frac1{u_x^{2n}}$. 
\end{proof}

To take the limit of the various $F_n$ in (\ref{FsGs}), we use the fact 
that each of the $A_i$ behaves as $-\frac{h}{2u_x}$ while $B$ behaves as
$\frac{h}{u_x+u_t}$. Thus in the limit we obtain $F_n$ as a rational function
of $u_x$ and $u_t$, in agreement with our  previous  conclusions. 

To summarize, in this section we have computed the continuum limit of 
the conservation laws for dKdV found by the Rasin-Hydon method and the 
Gardner transformation method. Since the continuum limits of the conservation
laws are nontrivial, so are the original ones. Furthermore the fact that the limits
of the two sets of conservation laws coincide strongly suggests that the two
sets of conservation laws are identical, but we have not yet succeeded in proving 
this.  

\section{Concluding remarks} 

In this article we have made substantial progress understanding conservation laws for the 
dKdV equation. We have presented two methods for constructing an infinite number of 
nontrivial, distinct conservation laws. The forms of the conservation laws are very similar, 
and we have seen that in a certain continuum limit they coincide, leading us to hypothesize 
that they are in fact equal. 

In addition to proving the two sets of conservation laws are equal, much more remains to be 
done. There are other constructions of the conservation laws for continuum KdV, such as 
the Lenard recursion \cite{Len1} and the method of Drinfeld-Sokolov \cite{DS1}, and it 
is interesting to know if these have analogs for dKdV. We note that proving equivalence of 
the different constructions for continuum KdV is also nontrivial \cite{Wil1}. We mentioned
in section 3 the curious fact that the ``$G$'' components of the first few conservation laws 
constructed by the Gardner method all have the form of a sum of terms of the form $p_kp_{k+1}$.
Sums of the form $\sum_k\psi_k\psi_{k+1}$ also appear as the discrete analog of the $L_2$ norm 
$\int \psi^2 dx$ in the context of the discrete Schr\"odinger equation upon which the 
inverse scattering theory for dKdV is built \cite{It1,It2}. 

Another completely open question is to understand the constraints on the dynamics of the 
dKdV equation that come about as a result of the infinite number of conservation laws. In the
case of continuum KdV, the conservation laws give rise to bounds on Sobolev norms of the 
solution, thus preventing initially smooth data becoming non-smooth \cite{Lax1}. The
conservation laws for dKdV discussed in this paper do not seem to be appropriate for this; 
maybe there are other conservation laws, or maybe some other understanding is appropriate. 

Finally, a lot of the work in this paper can be generalized to other 
equations in the ABS classification, though there are numerous subtleties. 
The homogeneity argument used in the  proof that  the Rasin-Hydon conservation laws are 
nontrivial breaks down, and for certain equations it seems the continuum limits
of the conservation laws are trivial. A paper on this subject is in preparation.

\bibliographystyle{acm} \bibliography{P} 

\end{document}